\def\blfootnote{\xdef\@thefnmark{}\@footnotetext}
\newtheorem{definition}{Definition}
\newtheorem{lemma}{Lemma}
\newtheorem{theorem}{Theorem}
\def\baselinestretch{1.0}
\begin{document}

\title{Providing End-to-End Delay Guarantees for Multi-hop Wireless Sensor Networks over Unreliable Channels}

\author{\IEEEauthorblockN{I-Hong Hou}
\IEEEauthorblockA{Computer Engineering and System Group \& Department of ECE\\
Texas A\&M University\\
College Station, TX, USA\\
ihou@tamu.edu}
}

\maketitle\blfootnote{
This material is based upon work partially supported by NSF under Contracts CNS-1035378, CCF-0939370, CNS-1035340, and CNS-0905397, USARO under Contract Nos. W911NF-08-1-0238 and W-911-NF-0710287, and AFOSR under Contract FA9550-09-0121.}

\begin{abstract}
Wireless sensor networks have been increasingly used for real-time surveillance over large areas. In such applications, it is important to support end-to-end delay constraints for packet deliveries even when the corresponding flows require multi-hop transmissions. In addition to delay constraints, each flow of real-time surveillance may require some guarantees on throughput of packets that meet the delay constraints. Further, as wireless sensor networks are usually deployed in challenging environments, it is important to specifically consider the effects of unreliable wireless transmissions.

In this paper, we study the problem of providing end-to-end delay guarantees for multi-hop wireless networks. We propose a model that jointly considers the end-to-end delay constraints and throughput requirements of flows, the need for multi-hop transmissions, and the unreliable nature of wireless transmissions. We develop a framework for designing feasibility-optimal policies. We then demonstrate the utility of this framework by considering two types of systems: one where sensors are equipped with full-duplex radios, and the other where sensors are equipped with half-duplex radios. When sensors are equipped with full-duplex radios, we propose an online distributed scheduling policy and proves the policy is feasibility-optimal. We also provide a heuristic for systems where sensors are equipped with half-duplex radios. We show that this heuristic is still feasibility-optimal for some topologies.

\end{abstract}

\begin{IEEEkeywords}
Wireless sensor networks; end-to-end deadline; real-time communications

\end{IEEEkeywords}

\section{Introduction}

The advance of wireless sensor networks provides an appealing solution for real-time surveillance. In real-time surveillance, wireless sensors generate flows of surveillance data and deliver them to a sink, which makes control decisions based on the data. Examples of such applications have been demonstrated in many previous work, such as \cite{IS07,SO06, XZ11}.

A major challenge for real-time surveillance is to provide end-to-end delay guarantees for packet deliveries. Designing scheduling policies that provide end-to-end delay guarantees is difficult due to two reasons. As wireless sensor networks may be deployed over a large area, some flows may require multi-hop transmissions to reach the sink. Further, wireless sensor networks are usually deployed in challenging environments, such as battlefields, forests, or underwater. Within these environment, it may be impossible to ensure that all wireless transmissions can be successfully received. Thus, a desirable policy needs to explicitly address the unreliable nature of wireless transmissions.

In this paper, we aim to address the above difficulties. We measure the performance of each surveillance flow by its \emph{timely-throughput}, defined as the throughput of packets that are delivered to the sink on time. We then propose a model that characterizes the hard per-packet end-to-end delay constraints and timely-throughput requirements of flows, the routing protocol for multi-hop transmissions, and the unreliable wireless channels. This model also considers both scenarios where sensors are equipped with full-duplex radios and half-duplex ones.

Based on the model, we establish a general framework for designing scheduling policies. We prove a sufficient condition for a scheduling policy to be \emph{feasibility-optimal}, that is, to be able to fulfill all timely-throughput requirements as long as they are feasible. We show that, based on this condition, there is a dynamic programming approach for designing policies for various types of systems.

We then consider designing online, tractable, and distributed scheduling policies. We propose a policy for systems where sensors are equipped with full-duplex radios. We prove that the proposed policy is feasibility-optimal. We also propose a simple heuristic for systems where sensors are equipped with half-duplex radios, and show that it is feasibility-optimal among certain topologies.

In addition to theoretical studies, we also provide simulation results. We compare our proposed policies against other policies. Simulation results show that our proposed policies achieve significantly better performance than others.

The rest of the paper is organized as follows. Section \ref{section:related} summarizes existing work on providing end-to-end delay guarantees. Section \ref{section:model} formally introduces our analytical model. Section \ref{section:framework} establishes a framework for designing feasibility-optimal policies. Based on the framework, Section \ref{section:sensor} proposes a feasibility-optimal policy for systems where sensors are equipped with full-duplex radios. Section \ref{section:half} proposes a heuristic for systems where sensors are equipped with half-duplex radios, and proves that the heuristic is feasibility-optimal for some topologies. Section \ref{section:simulation} demonstrates our simulation results. Finally, Section \ref{section:conclusion} concludes this paper.

\section{Related Work} \label{section:related}

Providing end-to-end delay guarantees have been an important research topic for various systems. Jayachandran and Abdelzaher \cite{PJ08} have studied this problem for distributed real-time systems where a job needs to traverse a number of processors before it is completed, and have provided a worst-case analysis for end-to-end delays. Hong, Chantem, and Hu \cite{SH11} have considered a similar problem and approached it by assigning local deadlines for each processor. Li and Eryilmaz \cite{RL11} have proposed a scheduling policy that aims to meet per-packet delay bounds and timely-throughput requirements of flows in wireline networks. However, no performance guarantees were provided for their scheduling policy. Rodoplu et al \cite{VR10} have studied the problem of estimating end-to-end delay over multi-hop wireless networks. Li et al \cite{HL09} have proposed using expected end-to-end delay for selecting path in wireless mesh networks. The expected end-to-end delay takes both queuing delay and delay caused by unsuccessful wireless transmissions. However, their work only aims at minimizing the average end-to-end delays, and cannot provide guarantees on per-packet delays. Jayachandran and Andrews \cite{PJ10} have applied a coordinated EDF scheduler for wireless networks and obtained asymptotic bounds on end-to-end delays. Li et al \cite{HL11} have used network calculus to analyze and derive upper-bound for end-to-end delays. Li, Li, and Mohapatra \cite{JL09} have proposed a distributed policy for scheduling packets with end-to-end delay guarantees. However, their work lacks theoretical guarantees on performance.

There has also been a lot of work that considers end-to-end delay guarantees for wireless sensor networks. Jiang, Ravindran, and Cho \cite{BJ09} have studied the real-time capacity of wireless sensor networks. They approach this problem by decomposing end-to-end delays into per-hop delays, and then study the probability for meeting each per-hop delay independently. Wang et al \cite{XW10} have used a similar decomposition approach and studied the problem of energy saving while providing end-to-end delay guarantees. Such decomposition approach inevitably leads to suboptimal solutions. Chipara et al \cite{OC11} have proposed a protocol for scheduling real-time flows by taking interference among sensors into account. Wang et al \cite{YW09} have investigated the distribution of end-to-end delay in wireless sensor networks. Wang et al \cite{QW11} have formulated the problem of providing end-to-end delay guarantees as an optimization problem, and have proposed a heuristic for obtaining sub-optimal solutions. Li, Shenoy, and Ramamritham \cite{HL05} have aimed at providing end-to-end delay guarantees by exploiting spatial reuse.

\section{System Model}  \label{section:model}

In this section, we present our model for multi-hop wireless sensor networks with end-to-end delay constraints. Our model extends a model proposed in \cite{IH09}, which only considers the delay constraints of packets and unreliability of wireless transmissions in a one-hop scenario.

Consider a sensor network with a set $N$ of wireless sensors. One of the sensors play the role of the sink. Sensors may generate surveillance data that need to be delivered to the sink in a timely manner, and they may relay data that are generated by other sensors. We assume that a routing tree has been constructed by some routing protocol for the sensor network. There has been a lot of work on constructing routing trees for wireless sensor networks, and \cite{AK04} provides a survey of these routing protocols. In the routing tree, the sink is the root, and hence we use $r$ to represent the sink. When a sensor $n$ has a data packet, either one generated by itself or one that is forwarded to it from other sensors, it may forward the data to its parent, denoted by $h(n)$, in the routing tree. Figure \ref{fig:model:example} shows an example of such a sensor network. A data packet is said to be \emph{delivered} if it reaches the sink. A sensor may generate multiple flows of data. For example, one sensor may generate data on both temperature and humidity. We denote the set of flows in the wireless sensor network by $F$, and $n(f)$ as the sensor that generates data of flow $f$.

\begin{figure}[t]
\includegraphics[height = 3.2in, angle = 90]{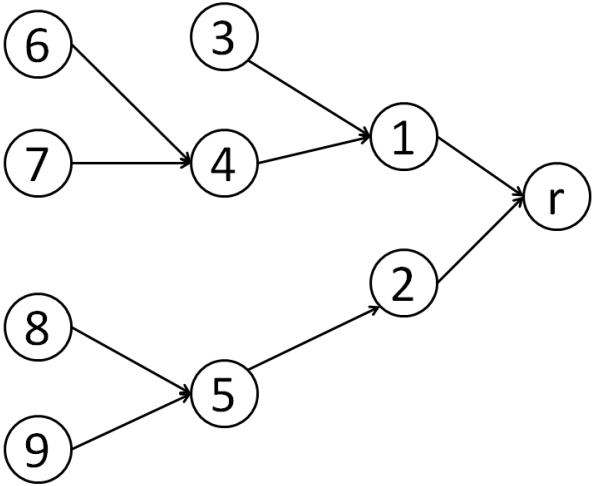}
\caption{An example of the system consists of 10 sensors. In the example, we have $h(1)=h(2) = r$, $h(3) = h(4)=1$, $h(5)=2$, etc.} \label{fig:model:example}
\end{figure}

We assume that time is slotted and numbered by $t=1,2,\dots$. The length of a time slot is set to be the time needed for a sensor to transmit one data packet. Time slots are further grouped into \emph{intervals}, where each interval consists of $T$ consecutive time slots in $(kT,(k+1)T]$, for some $k$. At the beginning of each interval, each flow in $F$ obtains some surveillance data and generates a data packet. We say that a the data packet of flow $f$ is generated at the $\tau_f^{th}$ time slot in each interval, so as to account for the latency caused by sensing and data processing. We assume that all data packets are delay-constrained, and data packets generated in one interval need to be delivered to the sink before the end of the interval. If a data packet is not delivered before the end of the interval, the packet is no longer useful for the sink. In this case, we say that the packet \emph{expires}, and drop the packet from the system. Thus, we can guarantee that all data received by the sink have delays no larger than $T$ time slots.

We consider both cases when sensors are equipped with full-duplex radios and when they are equipped with half-duplex radios. When sensors are equipped with full-duplex radios, they can transmit and receive data packets simultaneously. We also assume that the transmissions of different sensors do not interfere with each other by, for example, allocating different sensors on different subcarriers in an orthogonal frequency-division multiple access (OFDMA) system. A system where sensors are equipped with full-duplex radios is called a \emph{full-duplex system}.

The assumptions made for full-duplex systems may exceed current hardware limitations of wireless sensor network. Thus, we also consider systems where sensors are equipped with half-duplex radios. In such systems, sensors cannot transmit and receive data packets simultaneously. That is, when a sensor $n$ transmits, its parent $h(n)$ cannot transmit, or the transmission by $n$ encounters a collision and the transmission fails. Moreover, we assume that a sensor can receive at most one transmission in a time slot. That is, if we have sensors $n$ and $m$ with $h(n)=h(m)$, then at most one of them can transmit in a time slot. Finally, we assume that different transmissions do not interfere with each other except the two cases discussed above. This can be done by, for example, scheduling transmissions that may interfere with each other in different channels. A system where radios are equipped with half-duplex radios is called a \emph{half-duplex system}.

We consider the unreliable nature of wireless transmissions. To be more specific, we say that when a sensor $n$ transmits a data packet to its parent, $h(n)$, $h(n)$ correctly receives the packet with probability $p_n$. We also assume that, by implementing ACKs, the sensor $n$ has feedback information on whether its transmission is correctly received by $h(n)$, and it can retransmit the same packet in the case that a previous transmission fails.

As wireless transmissions are unreliable, it may be impossible to deliver all data packets to the sink on time. Instead, each flow $f$ requires a portion $q_f$ of packets to be delivered on time. That is, let $e_f(k)$ be the indicator function that the data packet of flow $f$ in the $k^{th}$ interval is delivered to the sink on time.  Each flow $f$ then requires that, with probability one, \[\liminf_{K\rightarrow\infty}\frac{\sum_{k=1}^Ke_f(k)}{K}\geq q_f.\] We call $\frac{\sum_{k=1}^Ke_f(k)}{K}$ as the \emph{timely-throughput} of flow $f$ up to interval $K$, and $q_f$ as the \emph{timely-throughput requirement} on flow $f$.

In this paper, we aim to design scheduling policies that \emph{fulfills} timely-throughput requirements of all flows as long as they are \emph{strictly feasible}. These terms are formally defined as follows:

\begin{definition}
A system is said to be \emph{fulfilled} by some scheduling policy if, under this policy, $\liminf_{K\rightarrow\infty}\frac{\sum_{k=1}^Ke_f(k)}{K}\geq q_f$ with probability one, for all flow $f\in F$.
\end{definition}

\begin{definition}
A system, either a full-duplex system or a half-duplex one, is \emph{feasible} if there exists some scheduling policy that fulfills it.
\end{definition}

\begin{definition}
A system, either a full-duplex system or a half-duplex one, is \emph{strictly feasible} if $q_f>0$ for all flows $f$, and there exists some $\epsilon>0$ such that the system is still feasible when each flow $f$ requires a timely-throughput of $(1+\epsilon)q_f$.
\end{definition}

\begin{definition}
A scheduling policy is \emph{feasibility-optimal} for full-duplex system, or half-duplex system, if it fulfills all strictly feasible full-duplex systems, or all strictly feasible half-duplex systems, respectively.
\end{definition}

We limit our discussions on strictly feasible systems only to simplify the proof of Theorem \ref{theorem:framework:sufficient} in the following section. As $\epsilon$ can be arbitrarily small, this limitation is not restrictive.

\section{A Framework for Scheduling Policies}   \label{section:framework}

In this section, we describe a sufficient condition for a policy to be feasibility-optimal. We then show a dynamic program approach that derives feasibility-optimal policies by employing this sufficient condition. This condition is based on the concept of \emph{debt}:

\begin{definition}
The \emph{debt} of $f$ in the $K^{th}$ interval is defined as $d_f(K):=Kq_f-\sum_{k=1}^Ke_f(k)$, where $e_f(k)$ is the indicator function that the packet for $f$ in interval $k$ is delivered on time.
\end{definition}

It is easy to show that a system is fulfilled under some scheduling policy if and only if $\limsup_{K\rightarrow\infty}\frac{d_f(K)}{K}\leq 0$. Based on the concept of debt, we establish a sufficient condition for a policy to be feasibility-optimal for full-duplex systems or half-duplex systems. The condition is similar to the one introduced in \cite{IH10}, which only considers one-hop transmissions, and is based on the following theorem:

\begin{lemma}[Telescoping Lemma]
\label{theorem:framework:lyapunov} Let $L(k)$ be a non-negative Lyapunov function depending only on $\mathcal{F}_k$, which denotes the set of all events in the system in the first $k$ intervals, i.e., $L(k)$ is adapted to $\mathcal{F}_k$. Suppose there exist positive constants $B>0$, $\delta>0$, and a stochastic process $f(k)$ also adapted to $\mathcal{F}_k$, such that:
\begin{equation}
E[L(k+1)-L(k)|\mathcal{F}_k]\leq B-\delta E[f(k)|\mathcal{F}_k],
\end{equation}
then
$\limsup_{K\rightarrow\infty}\frac{1}{K}\sum_{k=0}^{K-1}E[f(k)]\leq B/\delta$, where $E[x]$ is the expected value of $x$.
\end{lemma}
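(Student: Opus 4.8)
The plan is to use a direct telescoping-sum argument on the \emph{expected} drift, which is exactly what gives the lemma its name. First I would remove the conditioning in the hypothesis. Taking expectations of $E[L(k+1)-L(k)\mid\mathcal{F}_k]\leq B-\delta E[f(k)\mid\mathcal{F}_k]$ and applying the tower property of conditional expectation yields the unconditional bound $E[L(k+1)]-E[L(k)]\leq B-\delta E[f(k)]$ for every $k\geq 0$. This step implicitly uses that $L(k)$ and $f(k)$ are integrable; in the applications of interest $L(0)$ is a deterministic constant (typically $0$) and the per-interval changes are bounded, so this is automatic.

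Next I would sum the resulting inequality over $k=0,1,\dots,K-1$. The left-hand side telescopes to $E[L(K)]-E[L(0)]$, giving
\[
E[L(K)]-E[L(0)]\leq KB-\delta\sum_{k=0}^{K-1}E[f(k)].
\]
Rearranging and using the non-negativity of $L$, so that $E[L(K)]\geq 0$, gives
\[
\delta\sum_{k=0}^{K-1}E[f(k)]\leq KB-E[L(K)]+E[L(0)]\leq KB+E[L(0)].
\]

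Finally, dividing through by $\delta K$ produces
\[
\frac{1}{K}\sum_{k=0}^{K-1}E[f(k)]\leq\frac{B}{\delta}+\frac{E[L(0)]}{\delta K},
\]
and taking $\limsup_{K\rightarrow\infty}$ annihilates the second term, since $E[L(0)]$ is a fixed finite quantity independent of $K$. This leaves $\limsup_{K\rightarrow\infty}\frac{1}{K}\sum_{k=0}^{K-1}E[f(k)]\leq B/\delta$, as claimed.

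There is no deep obstacle here: the whole content is the telescoping cancellation together with discarding the non-negative terminal term $E[L(K)]$. The only point requiring care — which I would flag as a standing assumption rather than prove — is integrability of $L(k)$ and $f(k)$ together with $E[L(0)]<\infty$, so that the tower property and the manipulation of expectations are legitimate and so that the residual term $E[L(0)]/(\delta K)$ genuinely vanishes in the limit. Everything else is routine bookkeeping.
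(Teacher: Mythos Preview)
Your argument is correct and is precisely the standard telescoping proof of this Lyapunov drift bound. The paper, however, states this lemma without proof---there is no proof environment between the lemma and the subsequent theorem---so there is nothing to compare against; your write-up simply fills in the omitted (and routine) details.
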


\begin{theorem}
\label{theorem:framework:sufficient}
A scheduling policy is feasibility-optimal for full-duplex system, or half-duplex system, if, given $d_f(k)$, it maximizes \[\sum_{f\in F}d_f(k)^+E[e_f(k+1)]\] in the $(k+1)^{th}$ interval, for all $k$, where $x^+:=\max\{x,0\}$, for all full-duplex systems, or half-duplex systems, respectively.
\end{theorem}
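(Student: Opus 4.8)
The plan is to run a Lyapunov drift argument on the quadratic debt
\[
L(k):=\sum_{f\in F}\bigl(d_f(k)^+\bigr)^2,
\]
which is non-negative and adapted to $\mathcal{F}_k$, and then invoke the Telescoping Lemma. First I would establish a one-interval drift bound. Because the packet of flow $f$ in interval $k+1$ is freshly generated and any undelivered packet is discarded at the end of the interval, the debt satisfies the exact recursion $d_f(k+1)=d_f(k)+q_f-e_f(k+1)$ with $q_f\in(0,1]$ and $e_f(k+1)\in\{0,1\}$. A short case analysis ($d_f(k)\geq0$ versus $d_f(k)<0$) shows $\bigl(d_f(k+1)^+\bigr)^2\leq\bigl(d_f(k)^++q_f-e_f(k+1)\bigr)^2$; expanding the square, bounding $\bigl(q_f-e_f(k+1)\bigr)^2\leq1$, summing over $f$, and taking $E[\,\cdot\mid\mathcal{F}_k]$ gives
\[
E\bigl[L(k+1)-L(k)\mid\mathcal{F}_k\bigr]\leq|F|+2\sum_{f\in F}d_f(k)^+q_f-2\sum_{f\in F}d_f(k)^+E\bigl[e_f(k+1)\mid\mathcal{F}_k\bigr],
\]
where the conditional expectation of $e_f(k+1)$ is just the average over the fresh channel outcomes of interval $k+1$, which are independent of $\mathcal{F}_k$.

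Next I would convert the last term into a strictly negative contribution using strict feasibility. Since the intervals are statistically independent, the set of per-interval expected-delivery vectors $\bigl(E[e_f]\bigr)_{f\in F}$ achievable by a single-interval randomized schedule is convex and compact, and it coincides with the set of achievable long-run timely-throughput vectors; strict feasibility therefore supplies a stationary reference policy $\eta$, usable in every interval, with $E\bigl[e^{\eta}_f(k+1)\mid\mathcal{F}_k\bigr]\geq(1+\epsilon)q_f$ for all $f$. Since $\pi$ maximizes $\sum_{f\in F}d_f(k)^+E[e_f(k+1)]$ among all admissible schedules for interval $k+1$ --- the argument being identical for full-duplex and half-duplex systems, with only the admissible set changing --- comparing $\pi$ to $\eta$ gives $\sum_{f}d_f(k)^+E[e^{\pi}_f(k+1)\mid\mathcal{F}_k]\geq(1+\epsilon)\sum_{f}d_f(k)^+q_f$, hence
\[
E\bigl[L(k+1)-L(k)\mid\mathcal{F}_k\bigr]\leq|F|-2\epsilon\sum_{f\in F}d_f(k)^+q_f.
\]
Applying the Telescoping Lemma with $B=|F|$, $\delta=2\epsilon$, and $f(k)=\sum_{f\in F}d_f(k)^+q_f$, together with $q_f>0$, yields $\sum_{k=0}^{K-1}E[d_f(k)^+]=O(K)$ for every $f$; an Abel-summation step then gives $\sum_k E[d_f(k)^+]/k^2<\infty$, so $\sum_k d_f(k)^+/k^2<\infty$ almost surely by Tonelli. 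Finally, because $|d_f(k+1)-d_f(k)|\leq1$, any sample path with $\limsup_K d_f(K)/K>0$ would contain infinitely many disjoint excursions of $d_f$, each forcing a fixed positive contribution to $\sum_k d_f(k)^+/k^2$, a contradiction. Hence $\limsup_K d_f(K)/K\leq0$ almost surely for every $f$, which is exactly the fulfillment criterion recorded before Lemma~\ref{theorem:framework:lyapunov}; as this holds for every strictly feasible system of either radio type, $\pi$ is feasibility-optimal.

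The step I expect to be the main obstacle is the one just used in passing: showing that strict feasibility of the long-run-average requirements produces a single stationary randomized policy whose per-interval expected deliveries dominate $(1+\epsilon)q_f$ coordinatewise. This requires verifying that the per-interval delivery region is convex and compact and that, because the intervals form i.i.d.\ renewals, it equals the set of achievable long-run timely-throughput vectors, so that an arbitrary feasible (possibly time-varying) policy can be replaced by a stationary one. A secondary technical point is the passage from the time-averaged-expectation bound given by the Telescoping Lemma to the almost-sure statement required by the definition of \emph{fulfilled}; here the bounded-increment property of the debt sequence is what makes the argument go through.
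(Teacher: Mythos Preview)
Your proposal is correct and follows essentially the same route as the paper: the quadratic Lyapunov function on the positive debts, the one-step drift bound via the recursion $d_f(k+1)=d_f(k)+q_f-e_f(k+1)$, comparison of the maximizing policy against a stationary reference policy supplied by strict feasibility, and then the Telescoping Lemma. The only noteworthy difference is that where the paper outsources the final passage from the time-averaged expectation bound to the almost-sure statement by citing an external lemma (Lemma~4 of \cite{IH10}), you instead sketch that step directly via Abel summation, Tonelli, and the bounded-increment excursion argument; your identification of the existence of the stationary reference policy as the main gap is also apt, since the paper simply asserts it.
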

\begin{proof}
Consider a strictly feasible system where flow $f$ requires a timely-throughput of $q_f >0$. There exists some $\epsilon>0$ such that this system is fulfilled by some stationary randomized scheduling policy, $\eta^0$, when each flow $f$ requires a timely-throughput of $(1+\epsilon)q_f$. Thus, under $\eta^0$, we have $E[e_f(k)]\geq (1+\epsilon)q_f$.

Define a Lyapunov function $L(k):=\frac{1}{2}\sum_f(d_f(k)^+)^2$. Since $d_f(k+1)=d_f(k)+q_f-e_f(k+1)$, we have
\begin{align*}
&L(k+1)=\frac{1}{2}\sum_{f\in F}(d_f(k+1)^+)^2\\
\leq& \frac{1}{2}\sum_{f\in F}(d_f(k)^++q_f-e_f(k+1))^2\\
=&\frac{1}{2}\sum_{f\in F}(d_f(k)^+)^2+\sum_{f\in F}d_f(k)^+(q_f-e_f(k+1))\\
&+\sum_{f\in F}(q_f-e_f(k+1))^2\\
\leq& L(k)+\sum_{f\in F}d_f(k)^+(q_f-e_f(k+1))+B,\tag{2} \label{equation:framework:lyapunov}
\end{align*}
for some constant $B$, as $q_f$ and $e_f(k+1)$ are both bounded by 1, for all $f$.

As $E[e_f(k+1)]\geq (1+\epsilon)q_f$ under $\eta^0$, for all $f$, we have, under $\eta^0$,
\begin{align*}
&E[\sum_{f\in F}d_f(k)^+(q_f-e_f(k+1))|\mathcal{F}_k]\\
=&\sum_{f\in F}d_f(k)^+(q_f-E[e_f(k+1)])\\
=&\sum_{f\in F}d_f(k)^+q_f-\sum_{f\in F}d_f(k)^+E[e_f(k+1)]\\
\leq &-\epsilon\sum_{f\in F}d_f(k)^+q_f.\tag{3}
\end{align*}

Let $\eta^{max}$ be the policy that maximizes $\sum_{f\in F}d_f(k)^+E[e_f(k+1)]$. Under $\eta^{max}$, we also have
\[
E[\sum_{f\in F}d_f(k)^+(q_f-e_f(k+1))|\mathcal{F}_k]\leq-\epsilon\sum_{f\in F}d_f(k)^+q_f.\]
Thus, by Eq. (\ref{equation:framework:lyapunov}), we have, under $\eta^{max}$,
\begin{align*}
&E[L(k+1)-L(k)|\mathcal{F}_k]\\
\leq& E[\sum_{f\in F}d_f(k)^+(q_f-e_f(k+1))|\mathcal{F}_k]+B\\
\leq &B-\epsilon\sum_{f\in F}d_f(k)^+q_f.
\end{align*}
Thus, by Lemma \ref{theorem:framework:lyapunov}, we have
\[\limsup_{K\rightarrow\infty}\frac{1}{K}\sum_{k=1}^K\sum_{f\in F}E[d_f(k)^+q_f]\leq B/\epsilon.\tag{4}\label{equation:framework:sufficient}\]
Finally, Lemma 4 in \cite{IH10} shows that Eq. (\ref{equation:framework:sufficient}) implies that $\limsup_{K\rightarrow\infty}(\frac{d_f(K)^+}{K})=0$, for all $f$, and this system is fulfilled by $\eta^{max}$. Thus, $\eta^{max}$ is feasibility-optimal.
\end{proof}

\subsection{A Dynamic Programming Approach for Scheduling Policies}

We now introduce a dynamic programming approach for designing scheduling policies that aims at maximizing $\sum_{f\in F}d_f(k)^+E[e_f(k+1)]$ in the $(k+1)^{th}$ interval by making scheduling decisions for each time slot within the interval. We first describe the system evolution within an interval by a Markov decision process. In the $t^{th}$ time slot within the interval, we represent the state of the system by $t$ and the position of the packet of each flow. We denote the position of the packet of flow $f$ by $c_f(t)$, where $c_f(t)=n$ if $n$ has the packet and can transmit it in the $t^{th}$ time slot, and $c_f(t)=\phi$ if the packet of $f$ is yet to be generated, as the packet of flow $f$ will not be generated until the $\tau_f^{th}$ time slot in the interval. The evolution of $c_f(t+1)$ can then be described as follows: $c_f(t+1)=n(f)$ when $t+1=\tau_f$, where $n(f)$ is the sensor that generates the packet of $f$; $c_f(t+1)=h(c_f(t))$ with probability $p_{c_f(t)}$ if $c_f(t)$ transmits the packet of $f$ in the $t^{th}$ time slot, where $h(c_f(t))$ is the parent of $c_f(t)$ in the routing tree and $p_{c_f(t)}$ is the channel reliability between $c_f(t)$ and $h(c_f(t))$; and $c_f(t+1) = c_f(t)$, otherwise.

We say that the interval ends when $t=T+1$. Thus, the packet of $f$ is delivered, and $e_f(k+1)=1$, if $c_f(T+1)=r$. Let $V(t,\{c_f(t)\})$ be the value of $\sum_{f\in F}d_f(k)^+E[e_f(k+1)]$ when the state of the system is represented by $\{c_f(t)\}$ at the $t^{th}$ time slot under some policy. By Theorem \ref{theorem:framework:sufficient}, a policy is feasibility-optimal if it maximizes $V(1, \{c_f(1)\})$, where $\{c_f(1)\}$ is the state of the system at the beginning of the interval. Let $V^{max}(t,\{c_f(t)\})$ be the maximum value of $V(t,\{c_f(t)\})$. We then have the recursive relation: $V^{max}(t,\{c_f(t)\})=\max_{a\in A(t,\{c_f(t)\})}E[V^{max}(t+1,\{c_f(t+1)\})]$, where $A(t,\{c_f(t)\})$ is the set of possible schedule decisions under the current state. Hence $V^{max}(t,\{c_f(t)\})$ can be obtained by dynamic programming. Moreover, a policy that makes its schedule decisions by choosing $\arg\max_{a\in A(t,\{c_f(t)\})}E[V^{max}(t+1,\{c_f(t+1)\})]$ for all states is feasibility-optimal.

While we can derive feasibility-optimal policies from the above approach, this approach may require high computation overhead, as the number of states for the system is as large as $(T+1)(|N|+1)^{|F|}$. In the following sections, we demonstrate that there is a simple online policy for full-duplex systems. We also introduce a heuristic for half-duplex system and show that the heuristic is feasibility-optimal under some restrictions.

\section{An Online Scheduling Policy for Full-Duplex Systems}
\label{section:sensor}

We now introduce an online scheduling policy for full-duplex systems. The policy is very simple: in each time slot, a sensor $n$ picks the flow that has the largest debt among those that it currently holds their packets, and transmits its packet. In other words, in each time slot $t$ of the $(k+1)^{th}$ interval, $n$ schedules the packet from $\arg\max_{f:c_f(t)=n}d_f(k)$. We call this policy the \emph{Greedy Forwarder}. In addition to low complexity, this policy is also distributed and does not require a centralized scheduler. Moreover, as we establish below, this simple policy is indeed feasibility-optimal.

\begin{theorem} \label{theorem:sensor:optimal}
The Greedy Forwarder is feasibility-optimal.
\end{theorem}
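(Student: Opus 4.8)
By Theorem~\ref{theorem:framework:sufficient}, it suffices to show that the Greedy Forwarder maximizes $\sum_{f\in F}d_f(k)^+E[e_f(k+1)]$ in each interval. Equivalently, in the language of the dynamic-programming subsection, it suffices to show that the Greedy Forwarder, viewed as a stationary policy on the within-interval Markov decision process, achieves $V^{max}(1,\{c_f(1)\})$. The plan is to prove the stronger statement that the Greedy Forwarder achieves $V^{max}(t,\{c_f(t)\})$ from \emph{every} reachable state, by backward induction on $t$ from $t=T+1$ down to $t=1$. The base case $t=T+1$ is immediate since $V(T+1,\{c_f\})=\sum_{f:c_f=r}d_f(k)^+$ is determined by the state alone.

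For the inductive step, fix a time slot $t$ and a state $\{c_f(t)\}$. Because the full-duplex system has no interference and no half-duplex constraint, the scheduling decision \emph{decouples across sensors}: each sensor $n$ that holds at least one packet independently chooses one of its held flows to transmit, and the random outcomes at distinct sensors are independent. Hence $V^{max}(t,\{c_f(t)\})$ factors into a sum of per-sensor maximizations, and it is enough to show that at each sensor $n$ the greedy choice $\arg\max_{f:c_f(t)=n}d_f(k)$ is optimal for the continuation value. The key structural claim I would isolate is a \emph{monotonicity / exchangeability} property of $V^{max}$: if a flow $f$ is at sensor $n$ at time $t$ and is ``ahead'' of another flow $g$ at $n$ in the sense that its debt is at least as large and its remaining route to the root is no longer, then transmitting $f$ rather than $g$ never decreases the continuation value. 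Since all packets currently sitting at the same sensor $n$ have, by construction of the routing tree, the \emph{same} remaining path to the root $r$, the only remaining tie-breaker is the debt, and the larger-debt flow should be served — which is exactly the Greedy Forwarder's rule.

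To make the exchange argument precise I would use a coupling: given an optimal policy $\eta^{*}$ that in state $(t,\{c_f(t)\})$ transmits $g$ at sensor $n$ while $d_f(k)\ge d_g(k)$ with $c_f(t)=c_g(t)=n$, construct a policy $\eta'$ that transmits $f$ instead and thereafter mimics $\eta^{*}$ with the roles of $f$ and $g$ swapped, coupling the two channel-success coin flips at $n$ and all future coins along the shared downstream path so that the two packets make identical progress with identical randomness. Under this coupling the sets of flows delivered by $\eta'$ and $\eta^{*}$ differ only in that $f$ and $g$ may be interchanged in the delivered set; since $d_f(k)^+\ge d_g(k)^+$, the objective $\sum_f d_f(k)^+ e_f(k+1)$ is no smaller under $\eta'$. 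Iterating this exchange at every sensor and every time slot converts any optimal policy into the Greedy Forwarder without loss of objective value, completing the induction and hence the proof.

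The main obstacle is the coupling step when the two swapped flows later reach states that the base policy $\eta^{*}$ treats asymmetrically — in particular, when after the swap $f$ and $g$ sit at the \emph{same} sensor again (a subsequent relay), and $\eta^{*}$'s decision there depends on which physical packet is which. Handling this cleanly requires either (i) defining $\eta'$ to apply the same $f\leftrightarrow g$ relabelling persistently at all future decision points, so that ``swapped'' states map to ``swapped'' decisions and the induction hypothesis applies to the relabelled continuation, or (ii) strengthening the inductive claim to: ``$V^{max}(t,\cdot)$ is invariant under swapping two flows that are co-located and have identical debts, and is monotone under increasing one co-located flow's debt.'' I expect option (ii) to be the tidiest route, and I would state and prove that invariance/monotonicity lemma first, then derive optimality of greedy at each sensor as a one-line corollary.
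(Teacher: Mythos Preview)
Your approach is genuinely different from the paper's, and the interchange step as written has a real gap.

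You argue that if an optimal policy $\eta^*$ transmits $g$ at sensor $n$ while a co-located flow $f$ has $d_f(k)\ge d_g(k)$, then the policy $\eta'$ obtained by persistently swapping the roles of $f$ and $g$ has objective no smaller than $\eta^*$. Under your coupling one has $e_f^{\eta'}=e_g^{\eta^*}$ and $e_g^{\eta'}=e_f^{\eta^*}$ pathwise, so the difference in objective is $(d_f(k)^+-d_g(k)^+)(e_g^{\eta^*}-e_f^{\eta^*})$. This is \emph{negative} on the event $\{e_f^{\eta^*}=1,\ e_g^{\eta^*}=0\}$, which can occur: $\eta^*$ may transmit $g$ at time $t$, fail, and thereafter prioritise $f$. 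Thus your inequality ``objective is no smaller under $\eta'$'' is not established; at best optimality of $\eta^*$ gives the \emph{opposite} inequality $V(\eta')\le V(\eta^*)$. Your option~(i) is exactly this persistent relabelling, so it does not close the gap, and the invariance/monotonicity lemmas of option~(ii) (swapping co-located flows of equal debt; monotonicity in a single weight) are strictly weaker than what is needed, namely a \emph{positional} monotonicity: from time $t+1$, having the larger-weight flow one hop closer to the root is at least as good. That is the statement your backward induction actually requires, and it is not a one-line corollary of the lemmas you list. Relatedly, the assertion that ``$V^{max}$ factors into a sum of per-sensor maximisations'' is too strong: decisions at different sensors are separable in a single slot, but the continuation value couples them through common ancestors.

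The paper avoids these difficulties by inducting on the network size rather than on time. It simultaneously carries a second claim: if a sensor will generate packets at ordered times $t_1\le t_2\le\dots$ and is free to assign flows $f_1,f_2,\dots$ (sorted by debt) to those times, then assigning $f_i$ to $t_i$ maximises the global objective. Peeling off a leaf $n_0$, one observes that under any work-conserving rule the random \emph{success} times $\hat t_1\le\hat t_2\le\dots$ at $n_0$ depend only on the channel, not on which packet was sent; removing $n_0$ yields a network of size $|N|-1$ in which $h(n_0)$ ``generates'' those flows at times $\hat t_i+1$, and the second claim on the smaller network forces the debt-sorted assignment, which is precisely what the Greedy Forwarder produces. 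This leaf-peeling plus the auxiliary generation-time claim is the device that replaces your missing positional-monotonicity lemma.
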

\begin{proof}
By Theorem \ref{theorem:framework:sufficient}, we can show that the Greedy Forwarder is feasibility-optimal by showing that it maximizes $\sum_{f\in F}d_f(k)^+E[e_f(k+1)]$. To show this, we prove the following two claims by induction on the size of the network,$|N|$:

\begin{enumerate}
\item The Greedy Forwarder maximizes $\sum_{f\in F}d_f(k)^+E[e_f(k+1)]$.
\item Suppose a sensor generates packets from flow $f_1,f_2,\dots$, where $d_{f_1}(k)\geq d_{f_2}(k)\geq\dots$. Also assume that this sensor generates packets at time $t_1\leq t_2\leq\dots$, and the sensor has control over which flow among $\{f_1,f_2,\dots\}$ is to generate packets at time $t_1,t_2,\dots$, respectively. Then, with all other conditions fixed, by selecting $f_1$ to generate its packet at times $t_1$, $f_2$ to generate its packet at time $t_2$, etc, $\sum_{f\in F}d_f(k)^+E[e_f(k+1)]$ for the whole system is maximized.
\end{enumerate}

We first discuss the case when $|N|=1$, in which the sink $r$ is the only sensor in the system. As there is only one sensor in the system, there are no scheduling decisions to be made, and hence Claim (1) holds. Moreover, a packet of flow $f$ is delivered, and hence $e_f(k+1)=1$, if it is generated before the $T^{th}$ time slot. Thus, Claim (2) holds as $\sum_{f\in F}d_f(k)^+E[e_f(k+1)]$ is maximized by generating packets in decrement order of their debts.

Assume that both Claim (1) and Claim (2) hold for all networks with size $|N|=M$. We now show that these two claims also hold for all networks with $|N|=M+1$. We pick a leaf node, denoted by $n_0$, in the routing tree. We assume that $n_0$ generates packets for flows $f_1,f_2,\dots$, with $d_{f_1}(k)\geq d_{f_2}(k)\geq\dots$, at times $t_1\leq t_2\leq\dots$, and $n_0$ has control over which flows among $\{f_1,f_2,\dots\}$ is to generate packets at times $t_1,t_2,\dots$, respectively. We also assume that $n_0$ uses a work-conserving policy, i.e., it always schedules a transmission as long as it holds a packet\footnote{Obviously, a policy cannot lose its optimality by making more transmissions. Thus, this assumption is not restrictive.}. Under this policy, $n_0$ successfully transmits packets at times $\hat{t}_1\leq \hat{t}_2\leq\dots$. We note that $\hat{t}_1,\hat{t}_2,\dots$ are random variables whose distributions are determined by the channel reliability between $n_0$ and $h(n_0)$. Moreover, we have $\hat{t}_i\geq t_i$, for all $i$, as it is impossible to successfully transmit $i$ packets before at least the same amount of packets are generated.  Note that $\hat{t}_1,\hat{t}_2,\dots$ are not influenced by the order of packet generations and scheduling decisions, as each transmission made by $n_0$ is successful with probability $p_{n_0}$, regardless which packet is being transmitted.

Given $\hat{t}_1,\hat{t}_2,\dots$,  $n_0$ can effectively determine which packets are to be successfully transmitted at times $\hat{t}_1, \hat{t}_2,\dots$, by choosing the order of packet generations and scheduling decisions. The only restriction for $n_0$ is that the packet from a flow $f$ cannot be transmitted before it is generated. If the packet from a flow $f$ is successfully transmitted by $n_0$ at time $t$, $h(n_0)$ receives the packet at time $t$ and can transmit the packet starting at time $t+1$. Thus, this system is equivalent to one with $n_0$ removed, making the size of the network to be $M$, and packets from flows $f_1, f_2,\dots$ are generated at sensor $h(n_0)$ at times $\hat{t}_1+1, \hat{t}_2+1,\dots$. By the induction hypothesis on this system with $M$ sensors, $\sum_{f\in F}d_f(k)^+E[e_f(k+1)]$ is maximized if the packet from flow $f_i$ is generated at time $\hat{t}_i+1$, for all $i$. As $n_0$ can make this happen by choosing flow $f_i$ to generate a packet at time $t_i$ and follow the Greedy Forwarder, both Claim (1) and Claim (2) hold when $n_0$ is included in the system with size $M+1$.

By induction on $|N|$, we have that both claims hold for all systems, and hence the Greedy Forwarder is feasibility-optimal.
\end{proof}

We close this section by discussing some implementation issues of the Greedy Forwarder. As noted above, under the Greedy Forwarder, every sensor makes scheduling decisions solely based on the packets it holds. Still, the Greedy Forwarder requires each sensor to have the knowledge of debts for all flows in the current interval. In practice, this may be impractical. In particular, for a large network, sensors that are far away from the sink can only obtain delayed information on debts of flows. However, as we will show in Section \ref{section:simulation}, when sensors apply the Greedy Forwarder with delayed information on debts, the resulting performance can still be optimal. This is because, as the net change of debt, $|d_f(k+1)-d_f(k)|$, is bounded, the difference between the delayed information on debt and the actual current debt is also bounded for all flows. Therefore, a sensor that only has delayed information on debts will make similar scheduling decisions as one that has information on the current values of debts.

\section{A Heuristic for Half-Duplex Systems}
\label{section:half}

In this section, we propose a policy for half-duplex systems and show that this policy is feasibility-optimal for some particular scenarios.

Recall that there are two important limitations for half-duplex systems. First, as a sensor cannot transmit and receive simultaneously, a sensor $n$ cannot transmit when its parent, $h(n)$, is also transmitting. Second, a sensor can only receive one packet at a time, and hence two sensors $n$ and $m$ with $h(n)=h(m)$ cannot transmit simultaneously. To address theses challenges, we propose a policy, namely, the \emph{Closest Sensor First Policy}. We first define $d_n(t):=\max_{f:c_f(t)=n}d_f(k)$. In other words, $d_n(t)$ is the largest debt of flows that $n$ holds at the $t^{th}$ time slot. If the sensor $n$ does not hold any packets, $d_n(t)$ is defined to be $-\infty$. The Closest Sensor First Policy can be described iteratively as follows: First, we examine all sensors that are one-hop away from the root $r$, that is, we examine all sensors such that $h(n)=r$. We then schedule the sensor with the largest $d_n(t)$ to transmit, and the sensor transmits the packet from the flow with the largest debt. Next, we examine sensors that are two-hop away from the root, that is, sensors with $h(h(n))=r$. If $h(n)$ is scheduled to transmit in the first step, then $n$ cannot transmit. Otherwise, a sensor $n$ is scheduled to transmit the packet from the flow with the largest debt if $d_n(t)$ is the largest among all sensors $m$ with $h(m)=h(n)$. In the above procedure, ties are broken arbitrarily, and we carry the procedure iteratively.

In summary, a sensor $n$ who is $g$-hop away from the root is scheduled in the $g^{th}$ iteration if: (i) $h(n)$ is not scheduled in the previous iteration, and (ii) $d_n(t)$ is the largest among all $m$ with $h(m)=h(n)$. Fig. \ref{fig:half:example} shows an example that illustrates the Closest Sensor First Policy. In the example, we number sensors by their respective $d_n(t)$. Thus, we have $d_n(t)=n$. In the first iteration, sensor 5 is scheduled as $d_5(t)>d_3(t)$. In the second iteration, sensor 2 is scheduled as $d_2(t)>d_1(t)$. Note that sensor 4 cannot be scheduled as its parent, sensor 5, has already been scheduled. Finally, in the third iteration, sensor 7 and sensor 9 are scheduled.

\begin{figure}[t]
\includegraphics[height = 3.0in, angle = 90]{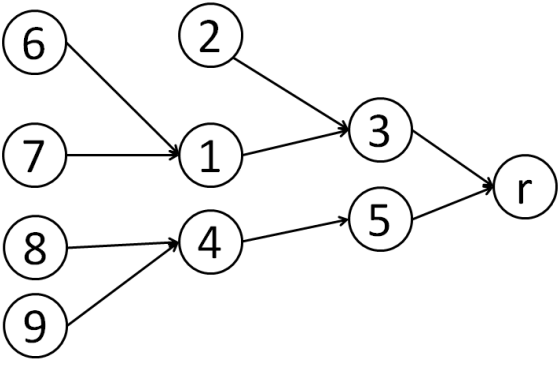}
\caption{An example that illustrates the Closest Sensor First policy.} \label{fig:half:example}
\end{figure}

Next, we show that the Closest Sensor First Policy is feasibility-optimal if all flows are generated by the same sensor $n_0$, i.e., $n(f)=n_0$, for all $f$, and each flow generates a packet at the beginning of the interval, i.e., $\tau_f=1$, for all $f$. In such a system, only sensors on the path between $n_0$ and $r$ are involved in forwarding messages. Thus, we call such a system as a \emph{path-topology system}.

\begin{theorem}
The Closest Sensor First Policy is feasibility-optimal for path-topology systems.
\end{theorem}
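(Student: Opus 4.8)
By Theorem~\ref{theorem:framework:sufficient}, it suffices to show that in every path-topology half-duplex system the Closest Sensor First Policy, given the debt vector $\{d_f(k)\}$, maximizes $\sum_{f\in F}d_f(k)^+E[e_f(k+1)]$ in the $(k+1)^{th}$ interval, for every $k$. Fix such an interval and write the path as $n_0\to n_1\to\dots\to n_L=r$, where $n_0$ is the common source and $n_{i+1}=h(n_i)$. Relabel the flows as $f_1,f_2,\dots$ so that $d_{f_1}(k)\geq d_{f_2}(k)\geq\dots$, and set $d_i:=d_{f_i}(k)^+$, so $d_1\geq d_2\geq\dots\geq 0$. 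Because every packet is generated at $n_0$ in the first slot, the only freedom a policy has in a slot is (a)~which set of packet-holding sensors transmits, subject to the half-duplex constraints, and (b)~which held packet each transmitting sensor sends.

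\emph{Step 1: reduction to a counting statement.} Under the Closest Sensor First Policy each relay always forwards the held packet of smallest index, so a straightforward induction on the hop number shows that packets cross every hop --- and hence are delivered to $r$ --- in the order $f_1,f_2,\dots$. Writing $M$ for the random number of packets delivered during the interval, the delivered set is therefore exactly $\{f_1,\dots,f_M\}$, so that $\sum_{f\in F}d_f(k)^+e_f(k+1)=\sum_{i=1}^{M}d_i=h(M)$, where $h(j):=\sum_{i=1}^{j}d_i$ is nondecreasing since $d_i\geq 0$. For an arbitrary policy $\pi$ delivering $M^\pi$ packets, the delivered debts sum to at most the $M^\pi$ largest of the $d_i$, i.e.\ to at most $h(M^\pi)$. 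Hence it suffices to prove that the Closest Sensor First Policy \emph{stochastically maximizes the number of delivered packets}: $\Pr[M\geq j]\geq\Pr[M^\pi\geq j]$ for every $j\geq 1$ and every policy $\pi$. Monotonicity of $h$ then gives $E[h(M)]\geq E[h(M^\pi)]\geq\sum_{f\in F}d_f(k)^+E[e_f^\pi(k+1)]$, which is what we need.

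\emph{Step 2: the counting statement.} Packet identities are now irrelevant, so only the activation schedule matters, and I would argue by the standard exchange method. Since the interval is a finite-horizon Markov decision process, an optimal policy exists; I transform it, one slot at a time and without ever decreasing $\Pr[M\geq j]$, until it coincides with the Closest Sensor First Policy, which in each slot, scanning from the root, activates each packet-holding sensor whose parent has not been activated. At the first slot where an optimal $\pi$ differs from this, either $\pi$ leaves some admissible hop idle that the Closest Sensor First Policy activates, or $\pi$ activates a sensor $n_i$ where the policy instead activates the conflicting, more rootward sensor $n_{i+1}=h(n_i)$. Each such local change is justified by a coupling lemma: (i)~activating one additional admissible hop never decreases $\Pr[M\geq j]$ from any reachable state; and (ii)~if the conflicting sensors $n_i$ and $n_{i+1}$ are both admissible, then activating $n_{i+1}$ leaves the system in a state from which $\Pr[M\geq j]$ is at least as large as after activating $n_i$ --- the intuition being that a packet already closer to the root is never worse off for being advanced than one further back. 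Applying (i) and (ii) finitely many times turns $\pi$ into the Closest Sensor First Policy, which is therefore optimal.

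\emph{Where the difficulty lies.} The real content is in the coupling lemmas of Step 2, and the obstruction is exactly the half-duplex interference. In the proof of Theorem~\ref{theorem:sensor:optimal} a leaf could simply be deleted because its successful-transmission times are independent of everything downstream; here, by contrast, the slots in which a sensor may usefully transmit depend on whether its parent transmits, and this constraint runs all along the path, so the hops cannot be treated independently and no such deletion is available. Establishing lemma~(ii) amounts to coupling two occupancy processes that differ only in the position of a single packet, even though the two candidate transmissions succeed with \emph{different} probabilities and the interference constraints make the set of admissible hops in later slots depend on where that packet sits; carrying such a coupling through to the end of the interval --- verifying that ``more rootward'' dominates at every subsequent step --- is the heart of the argument.
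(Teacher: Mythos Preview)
Your reduction in Step~1 is clean and correct: since the Closest Sensor First Policy delivers packets in decreasing order of debt, and any policy delivering $M^\pi$ packets earns at most $h(M^\pi)$, stochastic dominance of $M$ over $M^\pi$ is indeed sufficient.

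The gap is exactly where you say it is. Lemmas~(i) and especially~(ii) of Step~2 are asserted but not proved, and you yourself identify~(ii) as ``the heart of the argument.'' A slot-by-slot local exchange is delicate here precisely because of the half-duplex coupling you describe: swapping which of $n_i$, $n_{i+1}$ transmits in one slot changes which sensors are admissible in subsequent slots, and the two candidate transmissions succeed with different probabilities, so it is not at all clear how to couple the two continuations so that one dominates the other at every later step. Without these lemmas the proposal is a plan, not a proof.

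The paper avoids the local-exchange difficulty by a different decomposition. It first couples all channel randomness pathwise by fixing, for every sensor $n$, the sequence $\gamma_n[v]$ of cumulative attempts needed for $v$ successes; the comparison is then deterministic. Rather than exchanging slots, it compares \emph{delivery times}: letting $\theta_{f_i}$ and $\theta^*_{f_i}$ be the delivery times under Closest Sensor First and under an optimal policy $\eta^*$ (after first arranging by harmless swaps that $\theta^*_{f_1}\leq\theta^*_{f_2}\leq\cdots$ and that packet $f_i$ is never closer to the root than $f_j$ for $j<i$), it proves $\theta_{f_i}\leq\theta^*_{f_i}$ for all $i$ by induction on the number of flows. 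The inductive step for $f_{I+1}$ is a contradiction argument on the set of \emph{inversion times} --- slots at which $\eta^*$ schedules $f_{I+1}$ but Closest Sensor First would not --- showing that the last such inversion can always be eliminated or pushed later without changing any $\theta^*_{f_j}$. This flow-by-flow, delivery-time viewpoint sidesteps the need to compare arbitrary intermediate occupancy states, which is what makes the half-duplex interference tractable; it is the missing idea in your Step~2.
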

\begin{proof}
By Theorem \ref{theorem:framework:sufficient}, we can show that the Closest Sensor First Policy is feasibility-optimal for path-topology systems by establishing that this policy maximizes $\sum_{f\in F}d_f(k)^+E[e_f(k+1)]$ in the $(k+1)^{th}$ interval.

Let $\gamma_n[v]$ be the number of transmissions that sensor $n$ needs to make to successfully transmit $v$ packets to its parent. Note that this does not imply that sensor $n$ successfully transmits $v$ packets at the $\gamma_n[v]^{th}$ time slot in the interval, as there are time slots that sensor $n$ is not scheduled due to the constraints of half-duplex systems. We also note that $(\gamma_n[v]-\gamma_n[v-1])$ is a geometric random variable with mean $1/p_n$, as the channel reliability between $n$ and $h(n)$ is $p_n$. In practice, the values of $\{\gamma_n[v]\}$ cannot be obtained at the beginning of the interval. We will show that, even when the values of $\gamma_n[v]$ are given for all $n$ and $v$, there is no policy that can achieve larger $\sum_{f\in F}d_f(k)^+e_f(k+1)$ than the Closest Sensor First Policy, and hence than the Closest Sensor First Policy maximizes $\sum_{f\in F}d_f(k)^+E[e_f(k+1)]$.

We order flows so that $d_{f_1}(k)\geq d_{f_2}(k)\geq\dots$. Let $\eta$ and $\eta^*$ be the Closest Sensor First Policy and another policy that maximizes $\sum_{f\in F}d_f(k)^+E[e_f(k+1)]$, respectively. Further, let $\theta_{f_i}$ and $\theta^*_{f_i}$ be the times that the packet from flow $f_i$ is delivered under $\eta$ and $\eta^*$, respectively. If the packet from flow $f_i$ is not delivered on time under $\eta$, or $\eta^*$, we set $\theta_{f_i}$, or $\theta^*_{f_i}$, to be $T+1$. Thus, under $\eta$, or $\eta^*$, we have $e_{f_i}(k+1)=1(\theta_{f_i}<T+1)$, or $e_{f_i}(k+1)=1(\theta^*_{f_i}<T+1)$, respectively.

By the design of the Closest Sensor First Policy, we have $\theta_{f_1}\leq \theta_{f_2}\leq\dots$. Suppose there exists some $i$ so that $\theta^*_{f_i}>\theta^*_{f_{i+1}}$ under $\eta^*$. We can modify $\eta^*$ so that whenever it schedules the packet from flow $f_{i}$, it schedules the packet from flow $f_{i+1}$ instead, and vice versa. Under this modification, the packet of $f_i$ is delivered on the $(\theta^*_{f_{i+1}})^{th}$ time slot, and the packet of $f_{i+1}$ is delivered on the $(\theta^*_{f_{i}})^{th}$ time slot. If both $\theta^*_{f_i}$ and $\theta^*_{f_{i+1}}$ are smaller than $T+1$, both packets are still delivered on time after this modification, and hence the value of $\sum_{f\in F}d_f(k)^+e_f(k+1)$ is not influenced. If both $\theta^*_{f_i}$ and $\theta^*_{f_{i+1}}$ are larger than $T$, neither packets are delivered on time after this modification, and the value of $\sum_{f\in F}d_f(k)^+e_f(k+1)$ is not influenced. However, if $\theta^*_{f_i}>T\geq\theta^*_{f_{i+1}}$, the packet of $f_{i}$ is delivered on time and the packet of $f_{i+1}$ is not after the modification, and the value of $\sum_{f\in F}d_f(k)^+e_f(k+1)$ will not decrease, as $d_{f_i}(k)\geq d_{f_{i+1}}(k)$, with the modification. In sum, the value of $\sum_{f\in F}d_f(k)^+e_f(k+1)$ will not decrease with the modification. Thus, we can repeat this procedure until $\theta^*_{f_1}\leq\theta^*_{f_2}\leq\dots$ without decreasing the value of $\sum_{f\in F}d_f(k)^+e_f(k+1)$.

From now on, we assume that $\theta^*_{f_1}\leq\theta^*_{f_2}\leq\dots$ under $\eta^*$. We claim that, under this assumption, $\theta_{f_i}\leq \theta^*_{f_i}$ for all $i$. We prove this claim by induction on the number of flows. When there is only one flow in the system, the Closest Sensor First Policy schedules a transmission for flow 1 in every time slot, and hence $\theta_{f_1}\leq \theta^*_{f_1}$.

Assume that $\theta_{f_i}\leq \theta^*_{f_i}$ for all $i$ when the system has $I$ flows. We now consider the case when the system has $I+1$ flows. Under the Closest Sensor First Policy, whether the packet of a flow $f_i$ with $i\leq I$ is scheduled is not influenced by whether the flow $f_{I+1}$ is present in the system. Thus, the value of $\theta_{f_i}$ is the same as in the case when the system only has $I$ flows, for all $i\leq I$. We then have $\theta_{f_i}\leq \theta^*_{f_i}$ for all $i\leq I$ by the induction hypothesis. Therefore, we only need to prove that $\theta_{f_{I+1}}\leq \theta^*_{f_{I+1}}$. If $\theta^*_{f_{I+1}}=T+1$, i.e., the packet of flow $f_{I+1}$ is not delivered on time under $\eta^*$, then $\theta_{f_{I+1}}\leq \theta^*_{f_{I+1}}$ holds.

Consider the case $\theta^*_{f_{I+1}}\leq T$. Suppose that, under $\eta^*$, there is some time during the interval when the packet from flow $f_{i+1}$ is closer to the root than the packet from flow $f_i$, for some $i$. Pick $i'$ to be the smallest number so that the packet from flow $f_{i'+1}$ is closer to the root than the packet from flow $f_{i'}$ at some time $t$ during the interval. Now we can pick $t_1$ to be the largest time before $t$ such that the packet from flow $f_{i'+1}$ and that from flow $f_{i'}$ are held by the same sensor. Such $t_1$ exists as both packets are held by the sensor that generates all packets at the beginning of the interval. We then pick $t_2$ to be the smallest time after $t$ such that both packets are held by the same sensor. Such $t_2$ exists as we assume that the packet from flow $f_{i'}$ is delivered earlier than the packet from flow $f_{i'+1}$. Thus, in any time slot in $(t_1,t_2)$, the packet from flow $f_{i'+1}$ is always closer to the root than that from flow $f_{i'}$. Now, we can modify $\eta^*$ for time slots in $[t_1,t_2]$ so that when it schedules $i'$, it schedules $i'+1$ instead, and vice versa. After this modification, the packet from flow $f_{i'}$ is always closer to the root than that from flow $f_{i'+1}$ during $(t_1,t_2)$. Further, this modification does not influence $\theta_{f}$ for any flow $f$. We repeat this modification until such $i'$ does not exist. From now on, we can assume that, at any point of time, the packet from flow $f_i$ is not closer to the root than the packet from flow $f_j$ if $j<i$.

We prove that $\theta_{f_{I+1}}\leq\theta^*_{f_{I+1}}$ by contradiction. Note that after the packet of flow $f_I$ is delivered, $\eta$, or $\eta^*$, needs to schedule the packet of flow $f_{I+1}$ an addition number of $(\theta_{f_{I+1}}-\theta_{f_{I}})$, or $(\theta^*_{f_{I+1}}-\theta^*_{f_{I}})$, times before it is delivered, respectively. By the induction hypothesis, $\theta_{f_{I}}\leq\theta^*_{f_{I}}$. Therefore, if $\theta_{f_{I+1}}>\theta^*_{f_{I+1}}$, we have $\theta_{f_{I+1}}-\theta_{f_{I}}>\theta^*_{f_{I+1}}-\theta^*_{f_{I}}$, and, under $\eta^*$, there are times that the packet from flow $f_{I+1}$ is scheduled while it would not be schedule under $\eta$. We call these times the \emph{inversion times} and denote them by $\hat{t}_1<\hat{t}_2<\dots<\hat{t}_M$. We assume that, among all policies that deliver packets at times $\theta^*_{f_{1}}, \theta^*_{f_{2}}, \dots$, $\eta^*$ is one that has the smallest number of inversion times. Moreover, among those policies that have the smallest number of inversion times, $\eta^*$ is one that maximizes $\hat{t}_M$.

At time $\hat{t}_M$, the sensor $c_{f_{I+1}}(\hat{t}_M)$ holds the packet from flow $f_{I+1}$ and schedules it under $\eta^*$, while $c_{f_{I+1}}(\hat{t}_M)$ would not schedule this packet under $\eta$. There are two possibilities: first, the sensor $c_{f_{I+1}}(\hat{t}_M)$ holds another packet, and $\eta$ would schedule it; and second, under $\eta$, the sensor $c_{f_{I+1}}(\hat{t}_M)$ would not transmit, as its parent, $h(c_{f_{I+1}}(\hat{t}_M))$, would be scheduled for transmission. In the first case, under $\eta^*$, the transmission of $f_{I+1}$ cannot be successful. Otherwise, at time $\hat{t}_M+1$, the packet of $f_{I+1}$ is closer to the root than the other packet that $c_{f_{I+1}}(\hat{t}_M)$ holds, and violates our previous assumption. Thus, for this case, we can modify $\eta^*$ so that $c_{f_{I+1}}(\hat{t}_M)$ schedules the other packet, and this modification will not influence the deliver times of packets. In this case, there are no inversion times at and after time $\hat{t}_M$ after the modification. As this modification does not create new inversion times, we obtain a policy that has a smaller number of inversion times than $\eta^*$, which contradicts our assumption in the last paragraph. Now consider the second case, that is, the sensor $c_{f_{I+1}}(\hat{t}_M)$ would not be scheduled by $\eta$ because $\eta$ would schedule its parent for transmission. As $\hat{t}_M$ is the largest inversion time, there will be a time after $\hat{t}_M$ such that $h(c_{f_{I+1}}(\hat{t}_M))$ is scheduled to transmit a packet, and the packet from flow $f_{I+1}$ is not scheduled. We call this time $\hat{t}'$. At time $\hat{t}'$, either sensor $c_{f_{I+1}}(\hat{t}_M)$ or $h(c_{f_{I+1}}(\hat{t}_M))$ holds the packet from flow $f_{I+1}$. We can modify the schedule so that $h(c_{f_{I+1}}(\hat{t}_M))$ is scheduled for transmission at time $\hat{t}_M$, instead of $c_{f_{I+1}}(\hat{t}_M)$, and the packet from flow $f_{I+1}$ is scheduled for transmission at time $\hat{t}'$. Note that, by our previous assumption, the packet from flow $f_{I+1}$ is not closer to the sink than any other packets. In other words, every sensor that is farther from the sink than the one holds the packet from flow $f_{I+1}$ does not hold any packets. Therefore, this modification will not violate any interference constraints of the half-duplex system. This modification does not influence the delivery times of packets and does not increase the number of inversion times. Moreover, after applying the modification, the largest inversion time becomes $\hat{t}'$, which contradicts our assumption that $\eta^*$ maximizes the largest inversion time.

In summary, we have established that $\theta_{f_i}\leq \theta^*_{f_i}$ for all flows when there are $I+1$ flows in the system. By induction, we have $\theta_{f_i}\leq \theta^*_{f_i}$ for all flows, for all path-topology systems. Therefore, the Closest Sensor First Policy is feasibility-optimal for path-topology systems.
\end{proof}

\section{Simulation Results}    \label{section:simulation}

In this section, we present our simulation results. We adopt the simulation settings in \cite{HL05}, where each flow generates one packet every 20 ms, and it takes 2 ms for a sensor to make a transmission. Thus, we set the duration of a time slot to be 2 ms, and each interval consists of 10 time slots.

We consider the network topology as shown in Fig. \ref{fig:model:example}. For each sensor $n$, its channel reliability, $p_n$, is randomly selected within $[0.4, 0.9]$. We assume that each of sensor 3, sensor 5, sensor 6, sensor 7, sensor 8, and sensor 9 generates two flows. Therefore, there are a total number of 12 flows in the system. To better illustrate our simulation results, we assume that for each of these sensors, one flow requires a timely-throughput of $\alpha$, and the other requires a timely-throughput of $\beta$. We define the \emph{timely-throughput region} of a policy to be the region consists of all $(\alpha, \beta)$ that can be fulfilled by the policy. We can then evaluate the performance of a policy by its timely-throughput region.

For all scenarios, we conduct the simulation for 3000 intervals, i.e., one minute in the simulation environment. A system is said to be fulfilled if, by the end of the simulation, the debts are less than 90 for all flows, which means the actual timely-throughput that a flow has is at least $(q_f-0.03)$.

We consider both the full-duplex system and half-duplex system. For the full-duplex system, we compare our proposed policy, the Greedy Forwarder, against two other policies. We consider a policy where each sensor randomly chooses a packet that it holds to transmit in each time slot. The policy is called the Random policy. We also consider another policy where sensors give priorities to flows with higher timely-throughput requirements, and break ties randomly. The policy is called the Static Priority policy.

The simulation results for the full-duplex system is shown in Fig. \ref{fig:simulation:full-duplex}. The Greedy Forwarder achieves the largest timely-throughput region, as it is indeed feasibility-optimal. The performance of the Static Priority policy is close to optimal when either $\alpha$ is much larger than $\beta$, or vice versa, as it gives higher priorities to flows with larger timely-throughput requirements. On the other hand, the Static Priority policy inevitably starves flows with smaller timely-throughput requirements. Thus, it results in poor performance when $\alpha$ is close to $\beta$. The performance of the Random policy is also far from optimal, as it does not take the timely-throughput requirements of flows into account.

\begin{figure}[t]
\includegraphics[height = 3.0in, angle = 90]{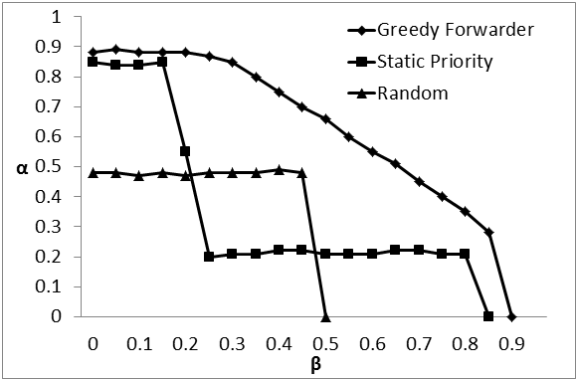}
\caption{Simulation results for the full-duplex system.} \label{fig:simulation:full-duplex}
\end{figure}

We also investigate the influence on the Greedy Forwarder when sensors only have delayed information on debts of flows. We assume that all sensors other than the sink update their information on debts every $\lambda$ intervals, and we call $\lambda$ the \emph{update period}. When a sensor $n$ updates its information, it notifies its children in the routing tree the information on debts that it currently has, and receives an updated information from its parent. Thus, for a sensor that is $g$-hop away from the sink, the information on debts that it has may be $\lambda\times g$ intervals old. We consider three scenarios: one where all sensors have knowledge of the current debts of flows, one where the update period is 100 intervals, i.e., 2 seconds, and one where the update period is 200 intervals. Simulation results are presented in Fig. \ref{fig:simulation:delay}. It can be shown that even when sensors update their information on debts as infrequent as once every four seconds, the performance of the Greedy Forwarder is still close to optimal.

\begin{figure}[t]
\includegraphics[height = 3.0in, angle = 90]{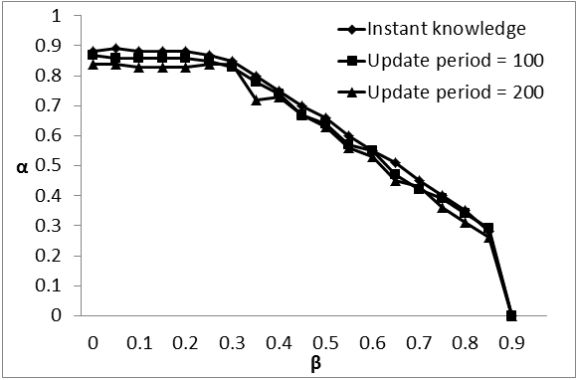}
\caption{Simulation results for systems where sensors only have delayed information on debts.} \label{fig:simulation:delay}
\end{figure}

Next we consider the half-duplex system. We consider a policy that, in each time slot, randomly selects a maximal set of sensors who can transmit simultaneously among those that hold some packets. Each selected sensor then randomly selects a packet to transmit. We call this policy the Random policy. We also consider a policy that first sorts all undelivered packets in descending order of the timely-throughput requirements of their associated flows. The policy then greedily selects a maximal subset of packets so that they can be transmitted simultaneously. This policy is called the Static Priority policy. Finally, we consider the Closest Sensor First policy.

The simulation results of the half-duplex system is shown in Fig. \ref{fig:simulation:half-duplex}. The Closest Sensor First policy achieves the largest timely-throughput region. A somewhat surprising result is that both the Random policy and the Static Priority policy have very poor performance. The Rand policy fails to fulfill the system even when we set $(\alpha, \beta)=(0,0.05)$, and hence its timely-throughput region does not appear in Fig. \ref{fig:simulation:half-duplex}. The reason for this behavior is because there are interference constraints for half-duplex systems, which limit the number of sensors that can transmit simultaneously. Thus, it is important to take the network topologies into account in order to deliver packets on time.

\begin{figure}[t]
\includegraphics[height = 3.0in, angle = 90]{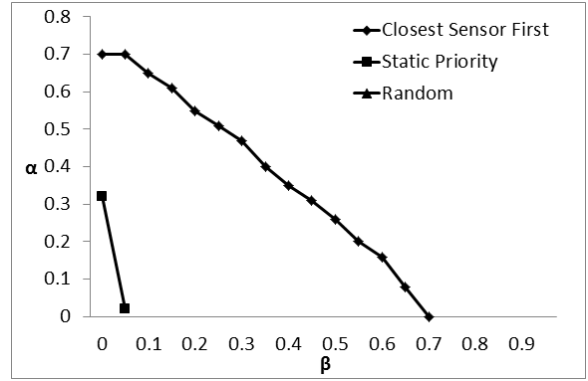}
\caption{Simulation results for the half-duplex system.} \label{fig:simulation:half-duplex}
\end{figure}

Finally, we consider a half-duplex path-topology system with six sensors and six flows, as depicted in Fig. \ref{fig:simulation:line_example}. All six flows are generated by sensor 5. Three of the flows require a timely-throughput of $\alpha$, while the other three flows require a timely-throughput of $\beta$. The simulation results are shown in Fig. \ref{fig:simulation:line}. The Closest Sensor First policy achieves the largest timely-throughput region. Moreover, both the Static Priority policy and the Random policy fail to fulfill the system even when we set $(\alpha, \beta) = (0,0.05)$.

\begin{figure}[t]
\includegraphics[height = 3.0in, angle = 90]{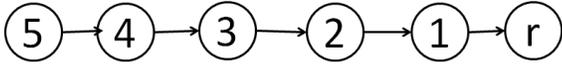}
\caption{The network topology for the half-duplex path-topology system.} \label{fig:simulation:line_example}
\end{figure}

\begin{figure}[t]
\includegraphics[height = 3.0in, angle = 90]{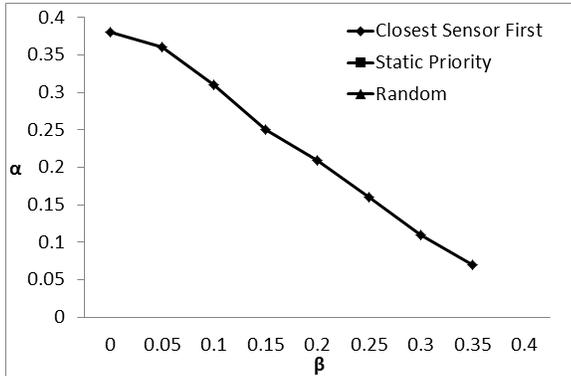}
\caption{Simulation results for the half-duplex path-topology system.} \label{fig:simulation:line}
\end{figure}

\section{Concluding Remarks}    \label{section:conclusion}

We have investigated the problem of providing hard per-packet delay guarantees for multi-hop wireless sensor networks. We have proposed an analytical model that jointly considers the hard delay guarantees of packets, the multi-hop routing tree of the network, the timely-throughput requirements of flows, and the unreliable nature of wireless transmissions. The model can be applied for both full-duplex systems and half-duplex systems. We have then introduced a framework for designing feasibility-optimal scheduling policies for different types of systems. Based on this framework, we have proposed a distributed scheduling policy for full-duplex systems and proved that this policy is feasibility-optimal. We have also proposed a heuristic for half-duplex systems. We have proved that this heuristic is feasibility-optimal for all path-topology systems. Simulation results have suggested that our proposed policies achieve much better performance than other policies.

\def\baselinestretch{1.0}
\bibliographystyle{ieeetr}
\bibliography{reference}
\end{document}